%%
%% This is file `sample-sigconf.tex',
%% generated with the docstrip utility.
%%
%% The original source files were:
%%
%% samples.dtx  (with options: `sigconf')
%% 
%% IMPORTANT NOTICE:
%% 
%% For the copyright see the source file.
%% 
%% Any modified versions of this file must be renamed
%% with new filenames distinct from sample-sigconf.tex.
%% 
%% For distribution of the original source see the terms
%% for copying and modification in the file samples.dtx.
%% 
%% This generated file may be distributed as long as the
%% original source files, as listed above, are part of the
%% same distribution. (The sources need not necessarily be
%% in the same archive or directory.)
%%
%%
%% Commands for TeXCount
%TC:macro \cite [option:text,text]
%TC:macro \citep [option:text,text]
%TC:macro \citet [option:text,text]
%TC:envir table 0 1
%TC:envir table* 0 1
%TC:envir tabular [ignore] word
%TC:envir displaymath 0 word
%TC:envir math 0 word
%TC:envir comment 0 0
%%
%%
%% The first command in your LaTeX source must be the \documentclass
%% command.
%%
%% For submission and review of your manuscript please change the
%% command to \documentclass[manuscript, screen, review]{}.
%%
%% When submitting camera ready or to TAPS, please change the command
%% to \documentclass[sigconf]{acmart} or whichever template is required
%% for your publication.
%%
%%
%\documentclass[sigconf]{acmart}
\documentclass[sigconf,authorversion]{acmart}

\usepackage{amsthm}
\usepackage{mathtools}
\usepackage{cleveref}
\usepackage{float}
\usepackage{tabularx,multirow}

\theoremstyle{definition}
\newtheorem{definition}{Definition}[section]
\newtheorem{theorem}{Theorem}[section]
\newtheorem{lemma}[theorem]{Lemma}

%%
%% \BibTeX command to typeset BibTeX logo in the docs
\AtBeginDocument{%
  }

%% Rights management information.  This information is sent to you
%% when you complete the rights form.  These commands have SAMPLE
%% values in them; it is your responsibility as an author to replace
%% the commands and values with those provided to you when you
%% complete the rights form.
\copyrightyear{2023}
\acmYear{2023}
\setcopyright{acmlicensed}\acmConference[ICPP 2023]{52nd International Conference on Parallel Processing}{August 7--10, 2023}{Salt Lake City, UT, USA}
\acmBooktitle{52nd International Conference on Parallel Processing (ICPP 2023), August 7--10, 2023, Salt Lake City, UT, USA}
\acmPrice{15.00}
\acmDOI{10.1145/3605573.3605588}
\acmISBN{979-8-4007-0843-5/23/08}

%%
%% Submission ID.
%% Use this when submitting an article to a sponsored event. You'll
%% receive a unique submission ID from the organizers
%% of the event, and this ID should be used as the parameter to this command.
%%\acmSubmissionID{123-A56-BU3}

%%
%% For managing citations, it is recommended to use bibliography
%% files in BibTeX format.
%%
%% You can then either use BibTeX with the ACM-Reference-Format style,
%% or BibLaTeX with the acmnumeric or acmauthoryear sytles, that include
%% support for advanced citation of software artefact from the
%% biblatex-software package, also separately available on CTAN.
%%
%% Look at the sample-*-biblatex.tex files for templates showcasing
%% the biblatex styles.
%%

%%
%% The majority of ACM publications use numbered citations and
%% references.  The command \citestyle{authoryear} switches to the
%% "author year" style.
%%
%% If you are preparing content for an event
%% sponsored by ACM SIGGRAPH, you must use the "author year" style of
%% citations and references.
%% Uncommenting
%% the next command will enable that style.
%%\citestyle{acmauthoryear}

%%
%% end of the preamble, start of the body of the document source.
\begin{document}

%%
%% The "title" command has an optional parameter,
%% allowing the author to define a "short title" to be used in page headers.
\title{Recoil: Parallel rANS Decoding with Decoder-Adaptive Scalability}

%%
%% The "author" command and its associated commands are used to define
%% the authors and their affiliations.
%% Of note is the shared affiliation of the first two authors, and the
%% "authornote" and "authornotemark" commands
%% used to denote shared contribution to the research.
\author{Fangzheng Lin}
\email{lin_toto@toki.waseda.jp}
\author{Kasidis Arunruangsirilert}
\email{kasidis@fuji.waseda.jp}
\affiliation{
  \institution{Waseda University}
  \streetaddress{3-4-1 Okubo}
  \city{Shinjuku}
  \state{Tokyo}
  \country{JP}
  \postcode{169-8555}
}

\author{Heming Sun}
\email{sun-heming-vg@ynu.ac.jp}
\affiliation{
  \institution{Yokohama National University}
  \streetaddress{79-1 Tokiwadai}
  \city{Hodogaya}
  \state{Yokohama}
  \country{JP}
  \postcode{240-0067}
}

\author{Jiro Katto}
\email{katto@waseda.jp}
\affiliation{
  \institution{Waseda University}
  \streetaddress{3-4-1 Okubo}
  \city{Shinjuku}
  \state{Tokyo}
  \country{JP}
  \postcode{169-8555}
}

%%
%% The abstract is a short summary of the work to be presented in the
%% article.
\begin{abstract}
Entropy coding is essential to data compression, image and video coding, etc. The Range variant of Asymmetric Numeral Systems (rANS) is a modern entropy coder, featuring superior speed and compression rate. As rANS is not designed for parallel execution, the conventional approach to parallel rANS partitions the input symbol sequence and encodes partitions with independent codecs, and more partitions bring extra overhead. This approach is found in state-of-the-art implementations such as DietGPU. It is unsuitable for content-delivery applications, as the parallelism is wasted if the decoder cannot decode all the partitions in parallel, but all the overhead is still transferred. 

To solve this, we propose Recoil, a parallel rANS decoding approach with decoder-adaptive scalability. We discover that a single rANS-encoded bitstream can be decoded from any arbitrary position if the intermediate states are known. After renormalization, these states also have a smaller upper bound, which can be stored efficiently. We then split the encoded bitstream using a heuristic to evenly distribute the workload, and store the intermediate states and corresponding symbol indices as metadata. The splits can then be combined simply by eliminating extra metadata entries. 

The main contribution of Recoil is reducing unnecessary data transfer by adaptively scaling parallelism overhead to match the decoder capability. The experiments show that Recoil decoding throughput is comparable to the conventional approach, scaling massively on CPUs and GPUs and greatly outperforming various other ANS-based codecs.
\end{abstract}

%%
%% The code below is generated by the tool at http://dl.acm.org/ccs.cfm.
%% Please copy and paste the code instead of the example below.
%%
\begin{CCSXML}
<ccs2012>
<concept>
<concept_id>10002951.10002952.10002971.10003451.10002975</concept_id>
<concept_desc>Information systems~Data compression</concept_desc>
<concept_significance>500</concept_significance>
</concept>
<concept>
<concept_id>10010147.10010169.10010170</concept_id>
<concept_desc>Computing methodologies~Parallel algorithms</concept_desc>
<concept_significance>500</concept_significance>
</concept>
<concept>
<concept_id>10002950.10003712.10003713</concept_id>
<concept_desc>Mathematics of computing~Coding theory</concept_desc>
<concept_significance>100</concept_significance>
</concept>
</ccs2012>
\end{CCSXML}

\ccsdesc[500]{Information systems~Data compression}
\ccsdesc[500]{Computing methodologies~Parallel algorithms}
\ccsdesc[100]{Mathematics of computing~Coding theory}

%%
%% Keywords. The author(s) should pick words that accurately describe
%% the work being presented. Separate the keywords with commas.
\keywords{Entropy Coding, Data Compression, Asymmetric Numeral Systems, Parallel rANS Decoding}
%% A "teaser" image appears between the author and affiliation
%% information and the body of the document, and typically spans the
%% page.

%\received{20 February 2007}
%\received[revised]{12 March 2009}
%\received[accepted]{5 June 2009}

%%
%% This command processes the author and affiliation and title
%% information and builds the first part of the formatted document.
\maketitle

\section{Introduction}
The demand for high-quality entertainment content, such as high-resolution images, Ultra-High-Definition (UHD) 4K and 8K videos, and the streaming of VR and AR content, is rapidly growing. However, the communication bandwidth, especially via wireless links \cite{10118777}, is very limited. Therefore, in these applications, data compression always plays a crucial role in both user experience enhancement and cost saving, by reducing the amount of transmission bandwidth and storage. Data compression is generally achieved by Entropy Coding, which efficiently encodes symbols close to their Shannon limit. The Asymmetric Numeral Systems (ANS) \cite{ANS} family is a series of modern entropy coders. Its variants are widely used in recent compression and coding algorithms such as JPEG-XL \cite{JPEGXL}, FSE \cite{FSE}, Zstandard \cite{Zstd}, etc., providing both superior compression and decompression speed, and high compression rate.

To achieve higher throughput and lower latency in these applications, exploiting parallelism out of entropy coding algorithms has always been desirable. However, this is not simple to achieve: entropy coding inherently relies on variable-length codes, resulting in internal data dependencies, which could be complex to split apart. In particular, an ANS bitstream must be serially encoded from the first symbol to the last, and decoded from the back to the front to get the symbols back in reverse order. Therefore, the conventional approach to parallel entropy coders often involves partitioning the input symbol sequence into smaller sub-sequences, and encoding each sub-sequence with an independent entropy coder. This approach is seen in state-of-the-art massively parallel rANS codec implementations such as DietGPU \cite{DietGPU}. This method presents a trade-off between compression rate and the level of parallelism: as the input sequence is partitioned into more sub-sequences, the worsening of the compression rate becomes more dominant, due to the almost linearly increasing amount of coding overhead.

In content-delivery applications, the server must account for the different parallel capacities of clients (decoders) by encoding the content with more sub-sequences. While this may allow massive parallelism, it also makes the file size larger. Besides, a decoding machine with a state-of-the-art GPU may be able to decode tens of thousands of sub-sequences in parallel, while a budget CPU can only decode a few at once. Therefore, the budget CPU couldn't take advantage of the massive-parallelism optimized content, but still needs to receive all the data required to exploit the parallelism intended for high-end machines. On the other hand, the server could also prepare multiple variations of the content to contend for different parallel capacities, but this is still undesirable as it creates great storage and computational overhead for the server. This is because of the primary drawback with the conventional partitioning symbols approach: once the symbol sequence is broken into smaller intervals, there is no going back since the data dependencies inside the entropy coders are already broken. There is no flexibility in the number of sequences after encoding, thus the tradeoff between compression rate and parallelism is fixed once the encoding is done. 

We propose an alternative approach, named Recoil, that offers such flexibility. Instead of encoding the symbols with multiple independent encoders, we first use a single group of interleaved rANS encoders to produce a single rANS bitstream. We then split the bitstream by recording the intermediate rANS encoder states before the split point and the symbol indices from which the states are taken. We show that this allows decoding to start at an intermediate position, through a synchronization process. We only pick the intermediate states at renormalization points, as they have a small upper bound and can be represented in fewer bits, reducing storage overhead. We then propose an efficient way to store all this metadata. As the bitstream is independent of the split metadata, the context server only needs to encode once, taking into account the maximum parallelism it intends to support. Then, Recoil enables splits to be combined by a very lightweight process before transmitting to the decoder, since we do not actually divide up the bitstream, but instead record metadata around the split point. 

The main contributions of Recoil are the following:
\begin{itemize}
    \item Recoil allows omitting unnecessary metadata based on the decoder's parallelism capability before transmission, significantly improving the compression rate and saving bandwidth on both sides. We achieved a maximum -14.12\% compression rate overhead reduction in our evaluation.
    \item Recoil makes little tradeoff in decompression speed. Experiments show that Recoil performs comparably in decoding throughput to the conventional method (90+ GB/s on a Turing GPU), outperforming various other ANS-based codecs. 
    \item Recoil is highly compatible with existing rANS decoders, allowing easy integration into standardized codecs. This is because Recoil does not actually modify the rANS bitstream, but instead works on independent metadata.
    \vspace{-1mm}
\end{itemize}

\section{Preliminary}

\subsection{The Range-variant of Asymmetric Numeral Systems (rANS)}
The rANS is a widely-used variant of the modern entropy coder, Asymmetric Numeral Systems. It encodes and decodes each symbol according to a probability density function (PDF) and the corresponding cumulative distribution function (CDF). The PDF and CDF can be either statically pre-computed or adaptively generated. The coding works with a single state, with which the sequence of symbols is represented. Encoding and decoding are performed by a transformation of the state.

\begin{definition}
  Let $S$ be the symbol set, $s$ denote the sequence of input symbols, $\forall i s_i \in S$. Let $f(t)$ and $F(t)$ denote the quantized PDF and CDF of the symbol set $S$. Both $f(t)$ and $F(t)$ are quantized to the range $[0, 2^n]$. Let the coder state after encoding / before decoding symbol $s_i$ be $x_i$. 
  
  Then, the encoding of symbol $s_i$ is formulated as:
\vspace{-0.7mm}
  \begin{equation}
    \label{eqn:rans_enc}
    x_i = 2^n \left\lfloor \frac{x_{i-1}}{f(s_i)} \right\rfloor + F(s_i) + (x_{i-1} \bmod f(s_i))
  \end{equation}

  The decoding of symbol $s_i$ is formulated as:
\vspace{-0.7mm}
  \begin{equation}
    \label{eqn:rans_dec}
    \begin{cases}
      s_i &= t \text{ s.t. } F(t) \leq x_i \bmod 2^n < F(t+1) \\
      x_{i-1} &= f(s) \left\lfloor \frac{x_i}{2^n} \right\rfloor - F(s) + (x_i \bmod 2^n)
    \end{cases}
  \end{equation}

  The division and remainder operations regarding $2^n$ are often implemented with bitwise operations.
\end{definition}

Intuitively, rANS codes a symbol with a smaller bit length when $f(s_i)$ is large, and vice versa, as shown in Equation \ref{eqn:rans_enc}. 

Encoding and decoding in rANS are symmetrical; encoding $s_i$ transforms state $x_{i-1}$ to $x_i$ while decoding gets $s_i$ back by restoring $x_i$  to $x_{i-1}$. Thus, rANS works like a stack: if the symbols are encoded from $s_0$ to $s_n$, the decoder retrieves them in reverse, $s_n$ to $s_0$. While in some implementations, the coder buffers the symbols first and then encodes them in reverse order (so that the decoder retrieves them in forward order), we do not consider this for simplicity.

\textbf{Renormalization.} To avoid handling an unmanageably large state, renormalization is often employed. During encoding, if the state overflows a calculated upper bound, its lower bits are written to a bitstream. The bitstream is read from during decoding when the state underflows a given lower bound.

\begin{definition}
  Let $L = k 2^n, k \in \mathbb{Z}^{+}$ denote the Renormalization Lower Bound. Let $b$ be the number of bits the encoder writes to / decoder reads from the bitstream once, $B$ denote the bitstream, and $p$ denotes the current offset in the bitstream that the codec is writing to / reading from. 
  
  The renormalization during encoding is formulated as:
  \begin{equation}
    \label{eqn:renorm_enc}
    \begin{cases}
      x_i &\coloneqq \left\lfloor \frac{x_i}{2^b} \right\rfloor \\
      B_p &\coloneqq x_i \bmod 2^b \\
      p &\coloneqq p + 1
    \end{cases}
    \text{ while } x_i \geq \frac{2^b}{2^n} L f(s_{i+1})
  \end{equation}

  The renormalization during decoding is formulated as:
  \begin{equation}
    \label{eqn:renorm_dec}
    x_i \coloneqq x_i 2^b + B_p,\,
    p \coloneqq p - 1
    \text{ while } x_i < L
  \end{equation}

  Similarly, this is often implemented with bitwise operations.
\end{definition}

\subsection{Interleaved rANS}
\label{ssec:interleaved_rans}

\begin{figure}[t]
  \centering
  \includegraphics[width=0.9\columnwidth]{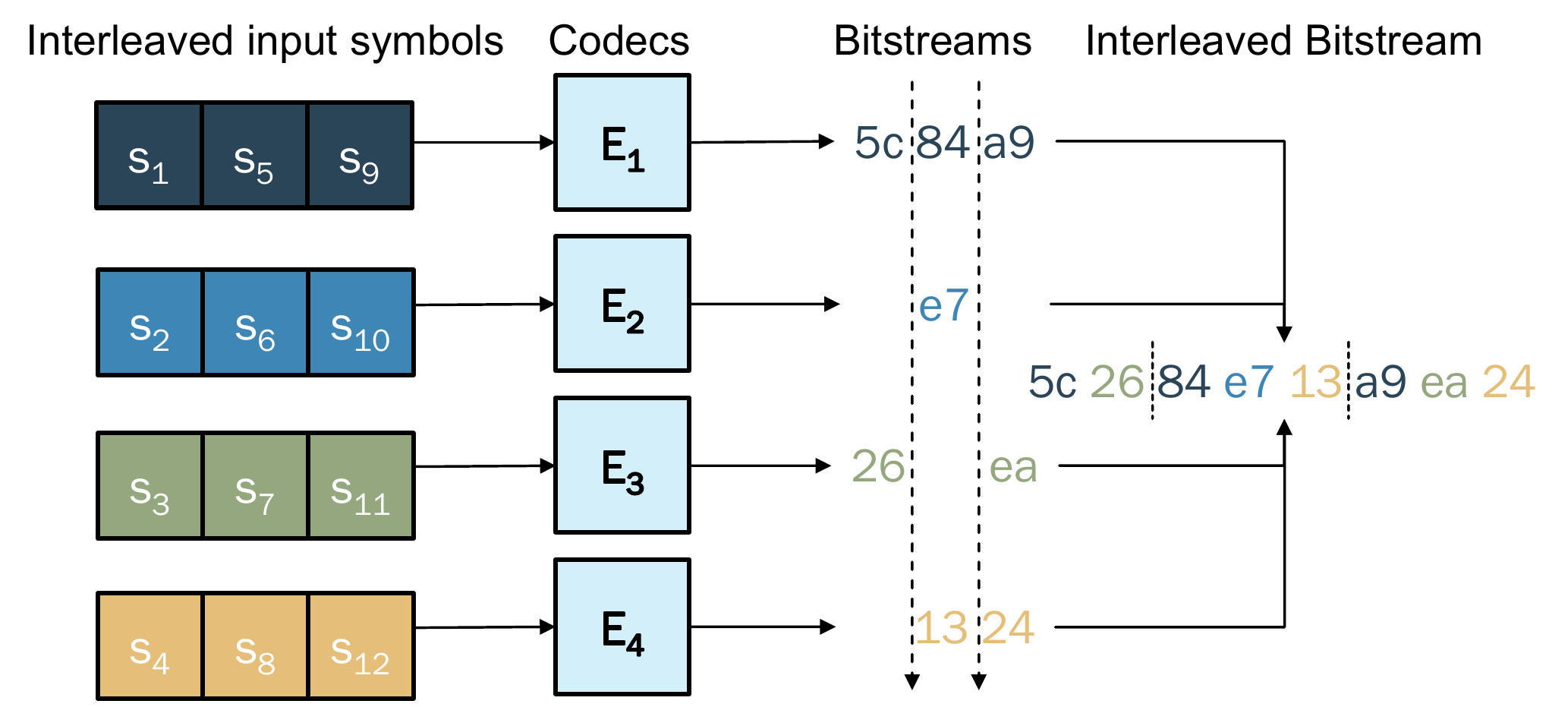}
  %\vspace{-1mm}
  \caption{Illustration of interleaved rANS encoding; decoding is done similarly but in reverse.}
  \label{fig:interleaved_rans}
  \vspace{-3mm}
\end{figure}

The rANS decoding algorithm is highly memory-bound. Specifically, the symbol lookup process in Equation \ref{eqn:rans_dec} often consists of an array search or a LUT lookup. Besides, each decoder state $x_i$ depends on the previous state $x_{i+1}$, limiting Instruction Level Parallelism (ILP). Interleaved rANS \cite{InterleavedRANS} is proposed to overcome these limitations. 

An illustration of 4-way interleaved rANS encoding is shown in Figure \ref{fig:interleaved_rans}. In this example, the symbols are processed in groups of 4. Each symbol in the group is encoded by one encoder. After encoding a group, some encoder states overflow, producing renormalization outputs, interleaved into a single bitstream, in the order of increasing encoder ID. The decoding is similarly in reverse: decoders that need to renormalize read from the bitstream in decreasing decoder ID order, before decoding the symbols.

Interleaved rANS shows boosted throughput, because using multiple coders mitigates the data dependency and is thus ILP- and SIMD-friendly. However, interleaved rANS fails to scale over multiple cores. Because the renormalization output must be packed into the bitstream after encoding each group, all threads must stop at a synchronization point to obtain the correct offsets they should write to. Similarly, the offset each decoder reads from during decoding depends on the underflow flag of all other decoders, requiring synchronization. This synchronization barrier is trivial in a core but heavy across multiple cores, and therefore forbids massive scaling over multi-core CPUs and GPUs.
\vspace{-2mm}
\subsection{Conventional "Partitioning Symbols" Approach}

\begin{figure}[t]
  \centering
  \includegraphics[width=0.9\columnwidth]{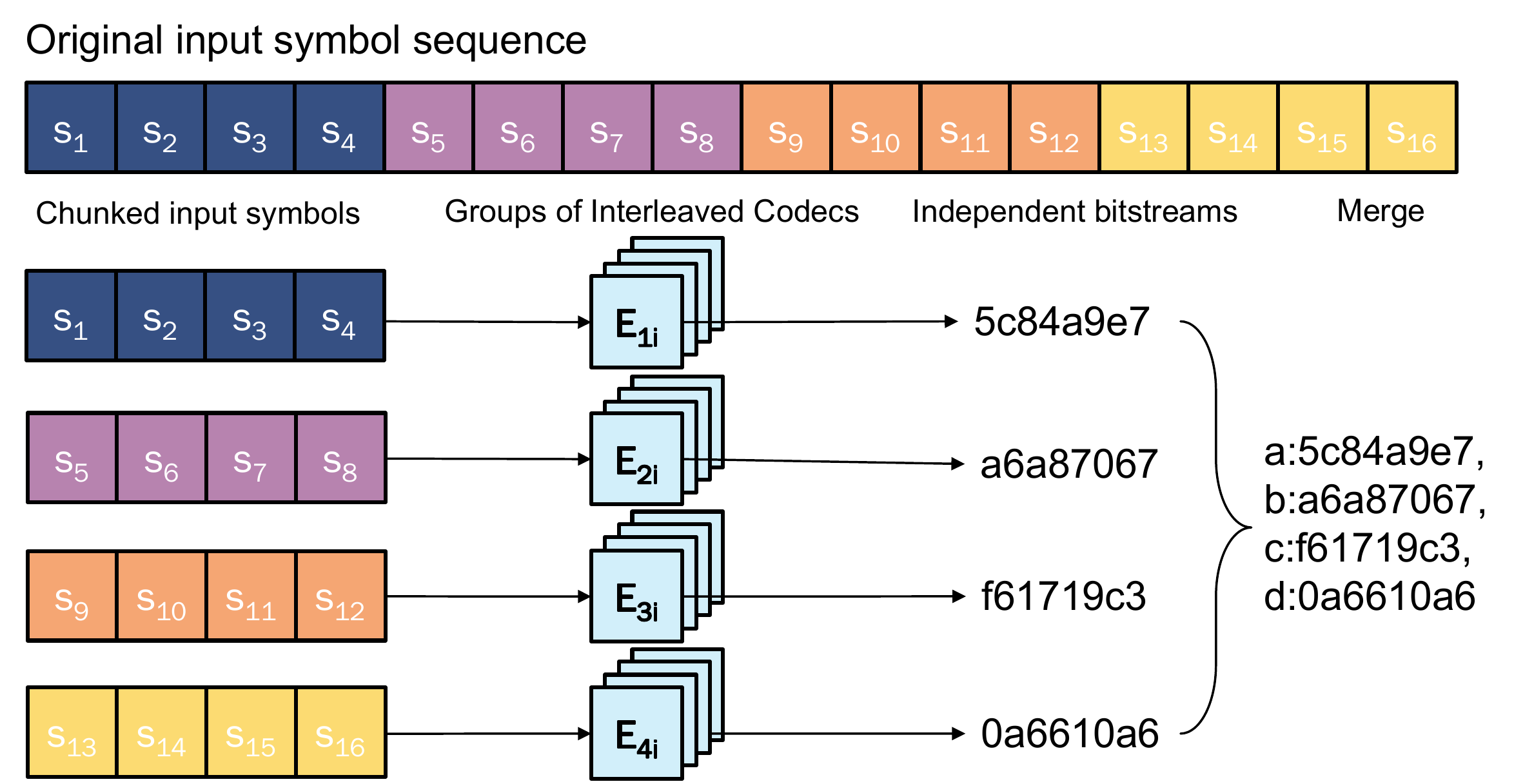}
  %\vspace{-1mm}
  \caption{Illustration of the conventional partitioning symbols approach.}
  \label{fig:conv_approach}
  \vspace{-4mm}
\end{figure}

\begin{figure}[t]
  \centering
  \includegraphics[width=0.9\columnwidth]{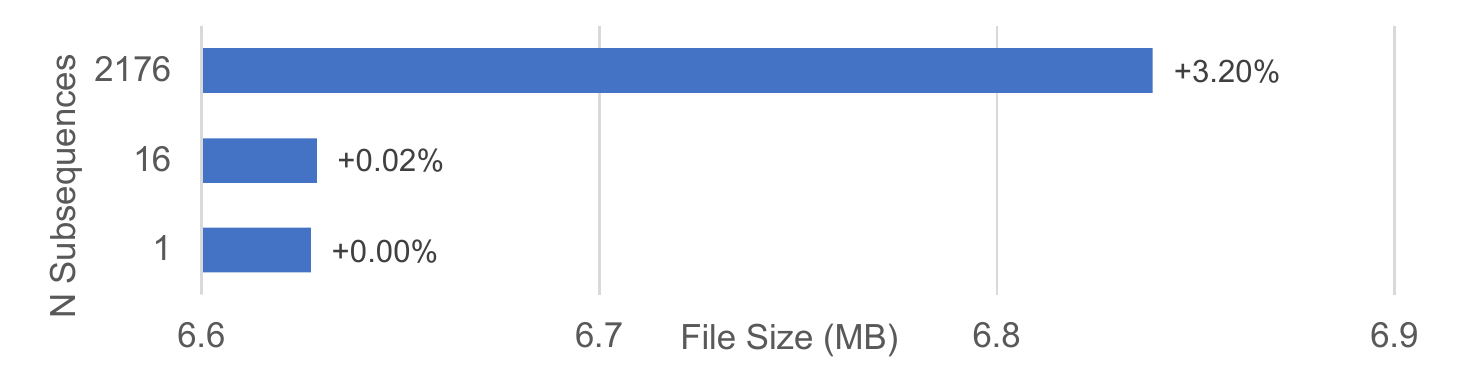}
  %\vspace{-2mm}
  \caption{Relationship of the compressed file size and the number of symbol sub-sequences, using the conventional partitioning symbols approach. Evaluated on the first 10 Megabytes of \textbf{enwik9} \cite{enwik9}, using a static distribution quantized to $2^{11}$. The base codec is 32-way interleaved. }
  \label{fig:conv_approach_overhead}
  \vspace{-4mm}
\end{figure}

The conventional approach partitions the input symbol sequence into smaller sub-sequences to achieve further high-throughput rANS decoding on multi-core CPU and GPU systems. The sub-sequences are encoded and decoded by rANS coders completely independent from each other, allowing them to execute in different threads and processors. These coders can be single encoders or decoders, or (in most implementations) groups of interleaved rANS encoders or decoders running on the same CPU core / GPU warp. This process is illustrated in Figure \ref{fig:conv_approach}. This method produces multiple independent bitstreams, which are often merged by simple concatenation and maintaining an offset table to locate the start of each sub-bitstream. The variants of this method, including those built for other ANS variations, are implemented in many previous works, such as DietGPU \cite{DietGPU} and \cite{GST, FPGA_tANS}.

However, partitioning the symbol sequence can come with a huge cost. We evaluated its impact using the partitioning symbols approach and a varying number of sub-sequences. We evaluated 16 sub-sequences, a typical core count of a high-end workstation CPU; and 2176 sub-sequences, the number of threads required to fully utilize a high-end GPU (RTX 2080 Ti). As Figure \ref{fig:conv_approach_overhead} shows, more symbol sub-sequences negatively impacts the compression rate. Especially, the 2176-sub-sequence variation, intended for high-end GPUs, greatly impacts file size. This overhead originates from the initial setup cost of rANS codecs, the final states, etc. Moreover, different hardware has different optimal sub-sequence numbers depending on parallelism capacities. The paradox is that it is impossible to prepare optimal variations of compressed data for every hardware; conversely, if the content delivery server serves the maxed-out variation to all decoders, it creates unnecessary data transfer for those that cannot utilize the max level of parallelism.

The root cause of this problem originates from partitioning symbols, which is irreversible. This breaks the data dependency, which the rANS codec will otherwise establish to achieve efficient coding. Therefore, our approach does not break this dependency chain, but instead uses metadata to provide the missing information, which enables decoding to start from intermediate positions. Our approach does not suffer a similar paradox since more metadata is only sent when the decoder has a larger parallel capacity.
\vspace{-2mm}
\subsection{Other Related Work}

\begin{figure*}[t]
  % It doesn't want to stay in the correct position if I put it in the motivation section. Why???
  \centering
  \includegraphics[width=0.75\linewidth]{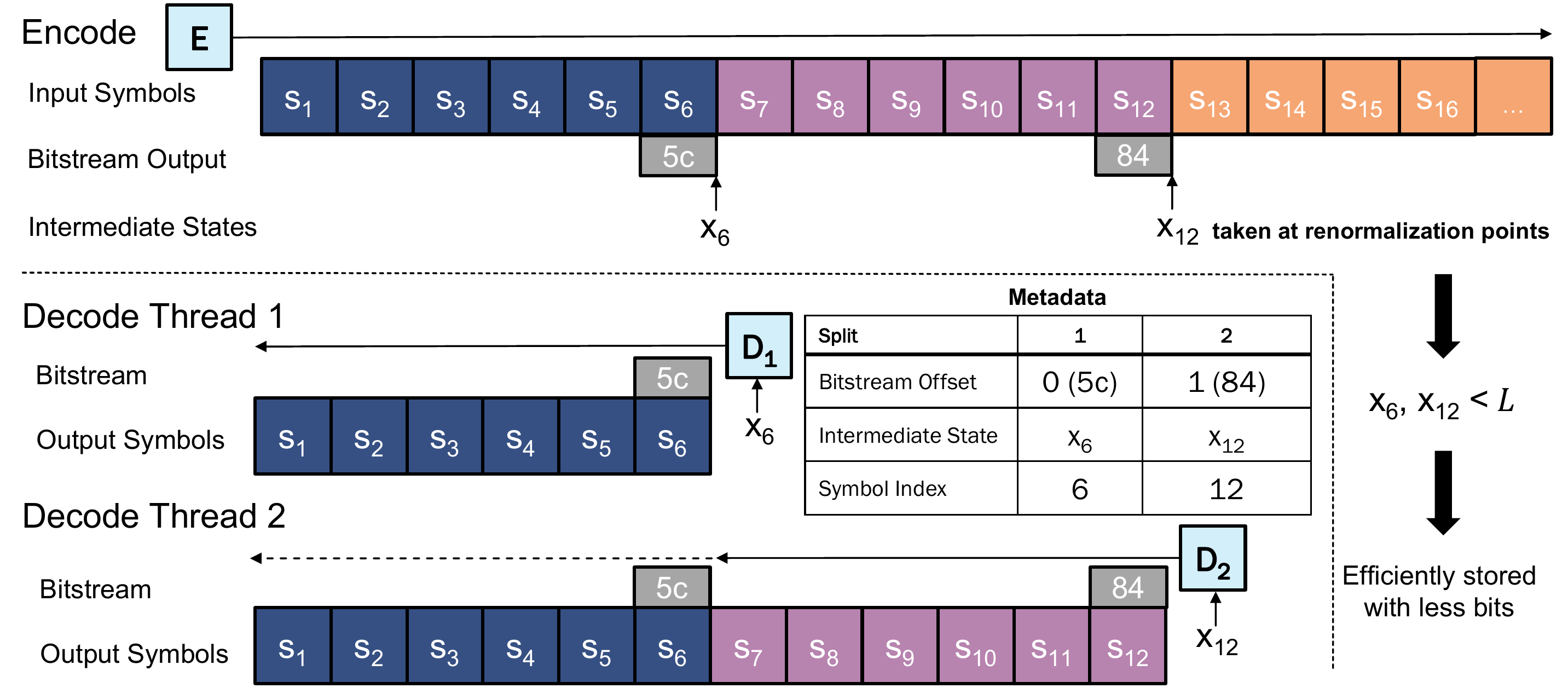}
  \caption{A simple proof of concept of rANS intermediate decodability and decode-adaptive scaling on non-interleaved rANS.}
  \label{fig:motivation}
  \vspace{-2mm}
\end{figure*}

Previous works showcased massively parallel decoding of various compression algorithms \cite{multi_huffman,NVJPEG,NVCOMP,CULZSS,GPU_bz2,GPU_LZ77,GPU_LZW,ndzip-gpu}. In particular, \textbf{multians} \cite{multians} utilizes the fact that tANS (table-variant) tend to self-synchronize even if decoding starts with an incorrect state. The tANS states usually have a limited range (a range of 1024 was used), encouraging self-synchronization. These approaches allow certain encoded bitstreams of corresponding serial encoders to be decoded in parallel, without any extra metadata or file size overhead, seemingly suggesting no harm to compression rate. 

However, the smaller state range limits the maximum probability quantization level $n$. This disables more fine-grained quantization often required in image and video coding. In addition, tANS requires the symbol probability distributions to be pre-computed into decoding tables, which is unsuitable for adaptive coding. While tANS theoretically requires less computation than rANS as its decoding is implemented as a pure LUT, the design of multians \cite{multians} shows a less cache-friendly memory access pattern and large self-synchronization overhead. Due to these reasons, although against instinct, our rANS-based approach greatly outperforms it in decoding speed, as shown in Section \ref{ssec:exp-throughput}.

\section{Motivation}

This section presents the main motivations we considered while designing Recoil, a parallel rANS decoding approach with decoder-adaptive scalability. In particular, we do not partition the uncompressed symbol sequence into sub-sequences before encoding. Instead, our design uses a single group of interleaved rANS encoders, then adds metadata to enable parallel decoding with multiple groups of interleaved rANS decoders. Admittedly, this design trades off encoder throughput; however, there are many use cases where a lower entropy encoder throughput is acceptable, and decoding throughput is the main concern, such as content delivery servers that our design mainly targets. This tradeoff brings greater flexibility in adaptively scaling according to the decoder.

For simplicity of explanation, we demonstrate the method using a bitstream encoded with a single non-interleaved rANS encoder in this section, as shown in Figure \ref{fig:motivation}. In Section \ref{sec:recoil}, we discuss how this method is extended to interleaved rANS bitstreams.

\subsection{Decodability of rANS from intermediate positions}

The rANS bitstream must be decoded from the end to the start because only the final coder state is preserved by being explicitly transmitted along with the bitstream. Let this final state be denoted by $x_n$. The decoder starts with a full bitstream and $x_n$, then iterates to derive $x_{n-1}$, $x_{n-2}$, \ldots, decoding the symbols $s_n$, $s_{n-1}$, $s_{n-2}$, \ldots, and consuming the bitstream during renormalization. 

However, as Equation \ref{eqn:rans_dec} and \ref{eqn:renorm_dec} shows, an iteration of decoding to derive $x_{i-1}$ only depends on two non-static parameters: (1) the previous state $x_i$; (2) the current bitstream offset $p$; assuming a static probability distribution. Since rANS is symmetric and encoder and decoder states are equivalent, it is possible to record these intermediate states and bitstream offsets during encoding, and share this information with the decoder. This enables multiple starting points for decoding. These decoders are completely independent of each other since they do not share either states or bitstream starting offsets, allowing rANS decoding to be scaled over multiple cores. 

For example, as shown in Figure \ref{fig:motivation}, if we record the intermediate state $x_6$, and the starting bitstream offset $0$, it allows decoder thread 1 to start decoding with the state $x_6$. After renormalizing with \textit{5c}, decoder derives the symbols $s_6$, $s_5$, \ldots, $s_1$. Similarly, for state $x_{12}$ and bitstream offset $1$, $s_{12}$ to $s_1$ can be decoded in thread 2.

In practice, we also share the corresponding symbol index at the split point. For example, we would also share that the intermediate state $x_6$ is taken at the index $6$. All metadata we share with the decoder is shown in the metadata table inside Figure \ref{fig:motivation}. This has the following practical advantages: (1) it offers thread 2 an easy way to stop at $s_7$ by simply using a counter. Without this index, each thread does not know how many symbols it must decode, and can only determine the stop points indirectly by checking if the current decoder state and bitstream offset match the start point of another thread; (2) it allows the threads to write to a common buffer of a pre-determined size, rather than dynamically allocating space as needed and copying to one buffer at the end; (3) it allows the use of adaptive coding, in which the probability distribution used in every iteration is dynamic, determined using symbol index as a key in many image codecs that use hyperprior-based context \cite{hyperprior,minnen2018,lin2023}. 

While this seems like a lot of extra information to transmit, we propose a method to efficiently store them in Section \ref{ssec:metadata_store}, so that the burden on compression rate does not exceed, and even outperforms the conventional partitioning symbols approach. In summary, the conventional method partitions the input symbols before encoding, while our method encodes them first and then records metadata that enables parallel decoding, allowing more flexibility.

\subsection{Bounded Intermediate States at Renormalization Points}

It may seem ideal to place split points at where the symbols would be partitioned into equal splits, to balance the workload of threads. However, storing the intermediate states requires many bits (each state can be up to 32 bits in our implementation). Fortunately, we observe that this overhead can be greatly reduced if we only allow splitting at renormalization points.

\begin{lemma}
If an state $x_i$ overflows $\left(x_i \geq \frac{2^b}{2^n} L f(s_{i+1})\right)$ in encoding, then after renormalizing once, $x_i < L$.   
\end{lemma}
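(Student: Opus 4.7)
The plan is to prove the sharper intermediate bound $x_i \leq 2^b L - 1$ \emph{immediately after encoding} $s_i$ via Equation~\ref{eqn:rans_enc}, from which the renormalization step $x_i \coloneqq \lfloor x_i / 2^b \rfloor$ trivially yields $\lfloor x_i / 2^b \rfloor \leq L - 1 < L$ as claimed. So the whole argument reduces to exhibiting a tight upper bound on the post-encoding state, and then observing that a single shift by $b$ bits already knocks it below $L$.

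To establish the intermediate bound, I would invoke the invariant that the prior renormalization loop has already enforced on the input state: before $s_i$ is encoded, the while-loop of Equation~\ref{eqn:renorm_enc} (applied with lookahead $s_i$) has terminated with $x_{i-1} < (2^b/2^n) L f(s_i) = 2^b k f(s_i)$, using $L = k \cdot 2^n$. Substituting into Equation~\ref{eqn:rans_enc}, the quotient satisfies $x_{i-1}/f(s_i) < 2^b k$, and since $2^b k$ is an integer, $\lfloor x_{i-1}/f(s_i) \rfloor \leq 2^b k - 1$. The additive piece $F(s_i) + (x_{i-1} \bmod f(s_i))$ is at most $F(s_i) + f(s_i) - 1 = F(s_i + 1) - 1 \leq 2^n - 1$, since the quantized CDF saturates at $2^n$. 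Combining these two bounds gives $x_i \leq 2^n(2^b k - 1) + (2^n - 1) = 2^{n+b} k - 1 = 2^b L - 1$, and dividing by $2^b$ completes the argument.

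The main obstacle is conceptual rather than algebraic: the lemma's hypothesis fixes attention on $x_i$ overflowing with respect to $f(s_{i+1})$, but the proof actually draws its strength from the analogous condition imposed on $x_{i-1}$ with respect to $f(s_i)$ during the previous round of renormalization. Making that inductive invariant explicit — that every state handed to the encoder has already been squashed below $(2^b/2^n) L f(\cdot)$ — is the one step that needs care. The overflow hypothesis in the statement itself only tells us \emph{why} the renormalization is being performed; it does not enter the arithmetic, and the lemma therefore really says that a single pass through the while-loop of Equation~\ref{eqn:renorm_enc} is always enough to restore normalization.
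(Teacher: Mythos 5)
Your proposal is correct and follows essentially the same route as the paper's own proof: both derive the bound $x_i < 2^b L$ from the renormalization invariant $x_{i-1} < \frac{2^b}{2^n} L f(s_i)$ on the \emph{previous} state and then conclude by a single division by $2^b$. Your version is slightly tidier on the integer arithmetic (using $\lfloor x_{i-1}/f(s_i)\rfloor \leq 2^b k - 1$ and the saturation $F(s_i+1) \leq 2^n$ to get the sharp bound $x_i \leq 2^b L - 1$), and your observation that the overflow hypothesis plays no role in the arithmetic is accurate, but these are refinements of the same argument rather than a different one.
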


\begin{proof}
\renewcommand{\qedsymbol}{}
Recall that $L = k 2^n, k \in \mathbb{Z}^{+}$. The previous state $x_{i-1}$ must be fully renormalized before being used in the next iteration. Thus:
\vspace{-0.7mm}
\begin{equation}
\begin{aligned}
  &x_{i-1} < \frac{2^b}{2^n} L f(s_i) \\
  \Leftrightarrow & \left\lfloor \frac{x_{i-1}}{f(s_i)} \right\rfloor < \frac{2^b}{2^n} L \\
  \Leftrightarrow & \left\lfloor \frac{x_{i-1}}{f(s_i)} \right\rfloor + \underbrace{\frac{F(s_i + (x_i \bmod f(s_i)))}{2^n}}_{< 1} < \frac{2^b}{2^n} L \\
  \Leftrightarrow & \underbrace{2^n \left\lfloor \frac{x_{i-1}}{f(s_i)} \right\rfloor + F(s_i + (x_i \bmod f(s_i)))}_{x_i} < 2^b L
\end{aligned}
\end{equation}

If $x_i$ is renormalized, it is divided by $2^b$. Let $x_i '$ denote the state after renormalization.
\vspace{-0.5mm}
\begin{equation}
x_i' = \left\lfloor \frac{x_i}{2^b} \right\rfloor < L
\vspace{-5mm}
\end{equation}

\end{proof}

In our implementation, $L$ is chosen to be $2^{16}$, so these bounded states can be safely represented in 16-bit numbers, reducing the overhead by half. Renormalization points are where bitstreams are written during encoding. For example, in Figure \ref{fig:motivation}, $x_6$ and $x_{12}$ are taken from renormalization points (which are identified because the encoder produced bitstream outputs there); thus, they are bounded by $L$ and can be represented with fewer bits. 

Intuitively, instead of partitioning the symbols, our method can be seen as splitting the encoded bitstream into sub-bitstreams. We then attempt to balance the number of symbols in each sub-bitstream, so that the workload is evenly distributed.

\subsection{Decoder-Adaptive Scalability}

Recall that the conventional partitioning symbols approach lacks flexibility in decoding scalability: once the bitstream is generated, the number of symbol sub-sequences is fixed, so that decoders with less parallel capacities suffer from a worsened compression rate. In contrast, achieving this flexibility is trivial with our design.

As shown in the metadata table in Figure \ref{fig:motivation}, metadata for thread 2 carries the necessary information needed for that thread to start decoding intermediately from symbol index 12. When no combining of splits happens, thread 2 stops decoding at $s_7$, because $s_1$ to $s_6$ are handled by thread 1. However, nothing prevents thread 2 from continuing decoding beyond that point; it naturally carries all the information required. In other words, if 2-thread-parallelism for the interval of $s_1$ to $s_{12}$ is not needed, we can safely drop the metadata for thread 1; this only removes the ability for decoding to start at symbol index 6. The combined split now appears as a single 12-symbol split to the decoder.

Therefore, combining splits is trivial, since it only requires removing the metadata in a way that combines the splits into bigger ones with close symbol counts, so that the workload is still balanced. Suppose the initial encoding produced $N$ splits metadata and the decoder only wants $M$, $M < N$; we could use a heuristic, but usually simply sending every other $\left\lceil \frac{N}{M} \right\rceil$ split metadata is good enough. As this process is very lightweight and does not require re-encoding of the source symbols, it can be done in real time by the content delivery server before data transmission to the decoder. We consider the use case, where the client requests content, and also attaches its parallel capacity inside the request header; the server receives the request, shrinks down the metadata in real-time, and serves the bitstream and the shrunk metadata to the decoder. No compression rate is wasted to provide unnecessary parallelism.

\section{The Recoil Decoder}
\label{sec:recoil}

\begin{figure}[t]
  \centering
  \includegraphics[width=0.9\columnwidth]{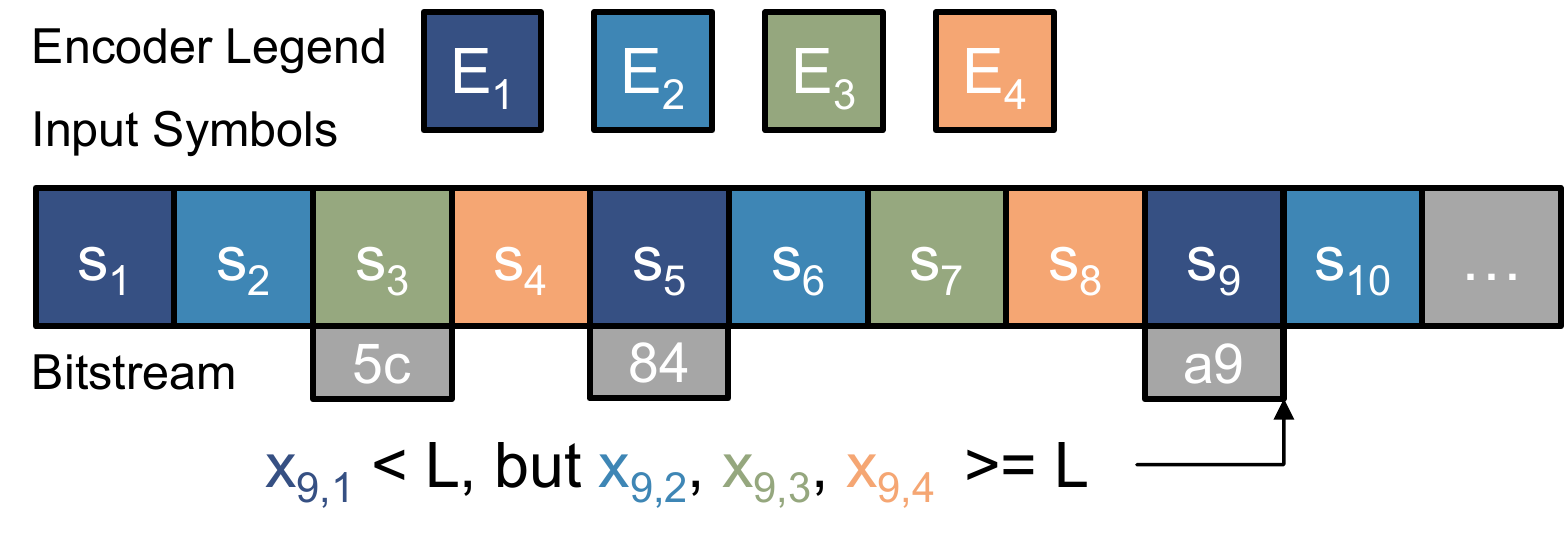}
  \vspace{-3mm}
  \caption{Illustration of that recording all intermediate states at a single position for interleaved rANS results in more storage overhead than expected.}
  \label{fig:extend_interleaved_pitfall}
  \vspace{-4mm}
\end{figure}

\begin{figure*}[t]
  \centering
  \includegraphics[width=0.85\linewidth]{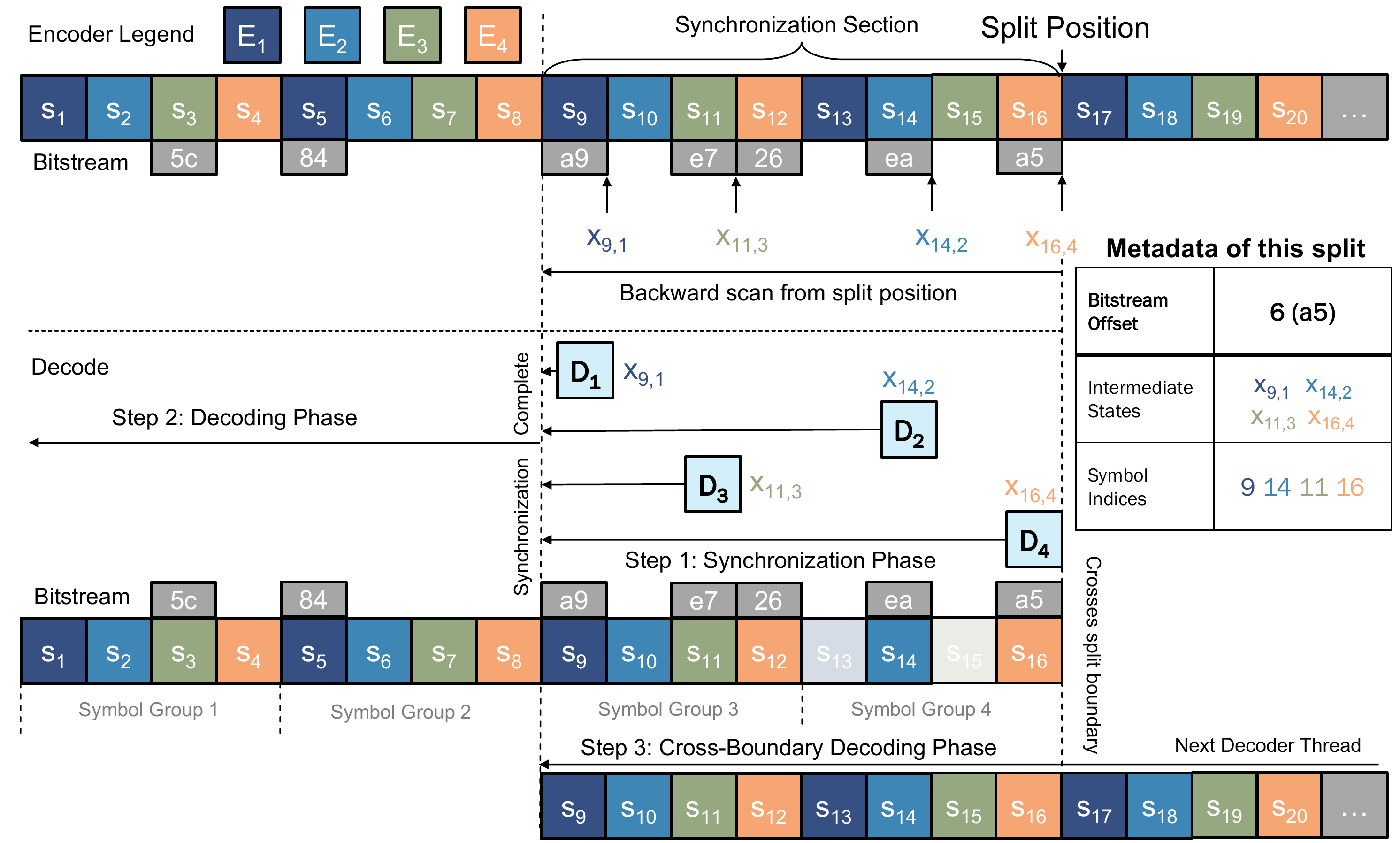}
  \caption{Illustration of Recoil, showing the encoding and decoding procedure of a split.}
  \label{fig:recoil}
  \vspace{-4mm}
\end{figure*}

In the previous section, we explored a potential parallel decoder design, based on non-interleaved rANS codecs. While this approach provides scalability over multi-core, it fails to scale well over SIMD. On the other hand, interleaved rANS \cite{InterleavedRANS} provides such scalability.

In this section, we present Recoil, a massively parallel interleaved rANS decoder with decoder-adaptive scalability that engineers the previous motivations to work over interleaved rANS bitstreams. 

\subsection{Extending to Interleaved rANS}
As shown in Figure \ref{fig:extend_interleaved_pitfall}, recording intermediate states at a single point no longer works with interleaved rANS: if the split point is placed at bitstream offset 2 (\textit{a9}), the intermediate states of the 4 encoders are recorded at this point. However, the only interleaved encoder renormalized here is $E_1$. The other encoders, $E_2$ to $E_4$, encoded other symbols since their last renormalizations. Their intermediate state is no longer below the lower bound and cannot be represented in the smallest number of bits. Therefore, we instead take the intermediate states from multiple close positions during encoding, then decode the bitstream in 3 phases. An illustration is shown in Figure \ref{fig:recoil}. 

During \textbf{encoding}, for a split position, we perform a backward scan to find the last renormalization points of each encoder. For example, in Figure \ref{fig:recoil}, the split position is at $s_{16}$, or bitstream offset 6 (\textit{a5}). Let $x_{i,j}$ denote the intermediate state of interleaved rANS encoder $E_j$ at symbol index $i$. Since the bitstream output \textit{a5} is produced by $E_4$ during renormalization, we record $x_{16,4}$. Then, backward scan finds $E_2$ producing \textit{ea}, so $x_{14,2}$ is recorded. Next, although $E_4$ also renormalized at $s_{12}$ producing \textit{26}, we do not record it as we ignore the previous renormalizations of each encoder. Similarly, we record $x_{11,3}$ and $x_{9,1}$. Since all 4 encoder states have been recorded at index 9, the backward scan terminates. We call the symbol interval $s_9$ to $s_{16}$ the Synchronization Section for this split.

The \textbf{decoding} is done in 3 phases: (1) Synchronization Phase to synchronize and recover the correct intermediate decoder states; (2) Decoding Phase in which normal interleaved rANS decoding is performed; and (3) Cross-Boundary Decoding Phase in which the Synchronization Section of the previous split is decoded.

\subsubsection{\textbf{Synchronization Phase}}

The decoder starts at $s_{16}$, but only the interleaved decoder $D_4$ is initialized here as the intermediate states for $D_1$ to $D_3$ are still unknown. The decoding of $s_{15}$ is skipped since the corresponding interleaved decoder $D_3$ is not initialized. Similarly, we initialize $D_2$ and decode $s_{14}$, and skip $s_{13}$ since $D_1$ is not ready. $s_{12}$ can be decoded correctly since $D_4$ is initialized. We initialize $D_3$, decode $s_{11}$, decode $s_{10}$ with $D_2$, and initialize $D_1$ at index 9. At this point, all 4 decoders are fully initialized, and their states are synchronized through the decoding process. 

As explained in Section \ref{ssec:interleaved_rans}, the interleaved bitstream must be read in a specific order by the interleaved decoders; if the read offset is misaligned, all subsequent decoding results will be incorrect. Recoil ensures the correctness of the read offset even with the absence of some interleaved decoders. This is because we always take the intermediate states from the last renormalization points during encoding. Therefore, interleaved decoders are always initialized immediately before the first time they read the bitstream. Absent interleaved decoders will not be reading the bitstream anyway, thus a correct read offset is always maintained. However, the decoded symbol sequence during this phase is incomplete. For example, in Figure \ref{fig:recoil}, symbols $s_{13}$ and $s_{15}$ are missing. The symbols produced in this stage are merely a side effect of the synchronization, so the decoding results in this phase are discarded.

\subsubsection{\textbf{Decoding Phase}}
After completing synchronization, normal rANS decoding can be performed starting from $s_8$, by the standard interleaved rANS decoding algorithm shown in Figure \ref{fig:interleaved_rans}. This decoding phase ends at the split position boundary.

\subsubsection{\textbf{Cross-Boundary Decoding Phase}}
The next thread handles decoding the next split containing the interval of $s_{17}$ to some later index. After decoding $s_{17}$, this thread finishes the Decoding Phase and now reaches the split boundary at bitstream offset 6 (\textit{a5}). Inherently, it carries all the correct intermediate decoder states required to continue decoding $s_{16}$ and forth. Therefore, this decoder crosses the split boundary, and decodes the symbols in the Synchronization Section of the previous split: $s_9$ to $s_{16}$. It terminates at the synchronization completion point, $s_9$.

\vspace{\baselineskip}

This design reflects the three motivations of Recoil: (1) \textbf{Decodability of rANS from intermediate positions} is achieved through the Synchronization Section and Synchronization Phase; (2) \textbf{Bounded intermediate states at renormalization points} are chosen by the backward scan during encoding; (3) \textbf{Decoder-adaptive scalability} is still trivial to achieve since (1) is maintained, so splits can still be combined by eliminating extra metadata entries.

\subsection{Distributing the workload}

The split points must still be picked carefully to balance the workload. The Synchronization Section adds complexity to this: larger Synchronization Sections bring more runtime overhead. Therefore, the bitstream must be split in a way ensuring both even workloads and less synchronization. We use a simple heuristic to achieve this:

\begin{definition}
  Suppose the full bitstream contains $N$ symbols, and $M$ splits are to be made. For any given split point, let $t$ denote the number of symbols encoded in the sub-bitstream that the previous and the current split points represent (including the Synchronization Section) and $t_s$ denote the number of symbols in the Synchronization Section. We optimize for the minimum of the following function:
\vspace{-1mm}
  \begin{equation}
    H(t, t_s) = \left | t - T \right | + \left | t - t_s - T \right | \text { where } T = \left\lceil \frac{N}{M} \right\rceil
  \end{equation}

  In other words, $T$ represents the expected number of symbols per split; we split at the position that makes the symbol count, including and excluding the Synchronization Section, as close to average as possible.
\end{definition}

Ideally, this heuristic produces many small splits with finely balanced workloads. Suppose $N$ splits are produced. Therefore, to combine these small splits into $M$ more coarse-grained ones, it is only necessary to pick the metadata by every other $\left\lceil \frac{N}{M} \right\rceil$ and send them to the client, achieving decoder-adaptive scaling in real-time.

\subsection{Efficient Metadata Storage}
\label{ssec:metadata_store}

\begin{table}[t]
  \caption{Metadata for the split point shown in Figure \ref{fig:recoil}. We only store the differences from expectations.}
  \vspace{-2mm}
  \label{tab:metadata_split_point}

  \resizebox{7.5cm}{!}{\begin{tabularx}{\columnwidth}{|>{\centering\arraybackslash}X||c|c|c|} 
    \hline
    Split Point Metadata & Expected & Actual & \textbf{Difference} \\
    \hline\hline
    Bitstream Offset & 5 & 6 & \textbf{1} \\
    \hline
    Max Symbol Group ID & 5 & 4 & \textbf{-1} \\
    \hline
   \end{tabularx}}
   \vspace{-3mm}
\end{table}

\begin{table}[t]
  \caption{Codec metadata for the split shown in Figure \ref{fig:recoil}. The intermediate states are stored as-is, while only the differences between the Symbol Group IDs and the Anchor are stored.}
   \vspace{-2mm}
  \label{tab:metadata_codec}

   \resizebox{7.5cm}{!}{\begin{tabularx}{\columnwidth}{|>{\centering\arraybackslash}X||c|c|c|c|} 
    \hline
    Codec Metadata & $E_1$ / $D_1$ & $E_2$ / $D_2$ & $E_3$ / $D_3$ & $E_4$ / $D_4$ \\
    \hline\hline
    \textbf{Intermediate States} & $\mathbf{x_{9,1}}$ & $\mathbf{x_{14,2}}$ & $\mathbf{x_{11,3}}$ & $\mathbf{x_{16,4}}$ \\
    \hline\hline
    Symbol Indices & 9 & 14 & 11 & 16 \\
    \hline
    Symbol Group IDs & 3 & 4 & 3 & 4 \\
    \hline
    Max (Anchor) & \multicolumn{4}{c|}{4} \\
    \hline
    \textbf{Differences} & \textbf{-1} & \textbf{0} & \textbf{-1} & \textbf{0} \\
    \hline
   \end{tabularx}}
   \vspace{-3mm}
\end{table}

Our heuristic ensures that the number of symbols per split will not differ largely from the average. Also, most real-world data has a mostly uniform distribution of entropy: in ideal conditions, equal numbers of symbols are compressed into equal bitstream lengths. While real-world datasets are not always ideal, the actual split points are likely not far away from this expectation. We exploit these facts to store the metadata efficiently.

First, we store the number of splits $M$, the bitstream length $B$, and the number of all symbols $N$ as-is in the metadata header. We then compute the expected sub-bitstream length $E_b$ for each split simply using rounded-up averages: $E_b = \lceil\frac{B}{M}\rceil$. The $i$-th split point is expected to be at bitstream offset $iE_b$. Then, we calculate the differences between the actual positions and these expectations, and only store the differences in the metadata, as Table $\ref{tab:metadata_split_point}$ shows.

We apply a similar approach to the metadata of each interleaved codec. The intermediate states are stored as-is since they are difficult to be encoded further. We do not store the actual symbol indices but only the Symbol Group IDs corresponding to the ones in Figure \ref{fig:recoil}, since it is trivial to convert them back and forth. An example is shown in Table \ref{tab:metadata_codec}: we take the max Symbol Group ID 4, and store it in the form of difference to the expectation, as shown in Table \ref{tab:metadata_split_point}; the expected value here is also simply a rounded-up average.

Next, we use this maximum value as an anchor, then compute and store the differences from all the Symbol Group IDs to it. This is based on the assumption that when one interleaved encoder renormalizes, the other encoders should also renormalize soon. The differences here are guaranteed to be negative or zero since they are compared to the maximum value. We drop the sign bits and store the absolute values only.

We group the difference values into data series and store them in the following format: (1) we first use a value to represent how many maximum bits each element occupies, and minus it by one, which is equal to $\max \lfloor \log_2 (v_i+1) \rfloor - 1$ where $v_i$-s are the individual values in the data series\footnote{Except for zero: we use one bit to represent zeros as well.}; (2) we then encode the series with this number of bits, with an extra sign bit if necessary. 

We allow up to 16-bit unsigned values for the Symbol Group ID differences, and therefore a 4-bit value for the bit count. We group the differences from each split into individual data series. Therefore, the Differences row in Table \ref{tab:metadata_codec} is stored as:
\vspace{-1mm}
\begin{displaymath}
  \underbrace{0000}_{\text{len} = 0 + 1 = 1 \text{ bit}}\underbrace{1}_{-1}\underbrace{0}_{0}\underbrace{1}_{-1}\underbrace{0}_{0}
\end{displaymath}

For the split point metadata in Table \ref{tab:metadata_split_point}, we allow up to 32-bit signed values and group the differences from all the splits into two data series containing all Bitstream Offset and Max Symbol Group ID differences, respectively.

\subsection{Implementation Details}

\begin{table}[t]
  \caption{Recommended rANS parameters for our implementation.}
  \vspace{-2mm}
  \label{tab:rans_params}

   \resizebox{7.5cm}{!}{\begin{tabularx}{\columnwidth}{|c|>{\centering\arraybackslash}X|c|} 
    \hline
    Symbol & Description & Value \\
    \hline\hline
    \texttt{sizeof}($x_i$) & size of rANS states & 32 bits \\
    \hline
    \texttt{sizeof}($s_i$) & size of symbols & 8 or 16 bits \\
    \hline
    $L$ & Renormalization lower bound & $2^{16}$ \\
    \hline
    $b$ & Renormalization output size & 16 bits \\
    \hline
    $n$ & PDF / CDF quantization level & \textit{varying}, max 16 \\
    \hline
    $|E|$ / $|D|$ & Number of interleaved codecs & 32 \\
    \hline
   \end{tabularx}}
   \vspace{-5mm}
\end{table}

We implemented four variations of Recoil decoding: (1) a pure C++ implementation, non-optimized as it is for debugging purposes; (2) an AVX2 implementation; (3) an AVX512 implementation; (4) a CUDA implementation for executing on NVIDIA GPUs. We expect the algorithm to be easily ported to any other SIMD + multi-core architecture, including other GPUs, since we do not rely on any platform-specific feature. We also include a basic encoder for testing purposes. All four implementations are mutually compatible; generated bitstreams by the encoder can be decoded by any of them. Implementations (2) and (3) can be selected based on the target platform's AVX support when decoding on CPU.

Recoil is implemented as a C++20 header-only library and relies heavily on templates to allow customizing almost every parameter. However, for the best performance, we recommend the parameters in Table \ref{tab:rans_params}, and used them throughout the experiments.

We use 32-way interleaved rANS because it performs best for both AVX implementations and naturally fits into a GPU warp. For the AVX2 implementation, we use 8-way 32-bit interleaved decoders in each instruction, and manually unroll four times; for the AVX512 implementation, we use 16 ways in each instruction and unroll twice. We recommend launching one thread per CPU core and not utilizing SMT. For CUDA, we use 128 threads per block operating on four groups of interleaved decoders and use a CUDA library function call\footnote{\texttt{cudaOccupancyMaxActiveBlocksPerMultiprocessor}} to obtain the optimal block count. We build LUTs for the symbol lookup process shown in equation \ref{eqn:rans_dec}. Here we apply a common optimization: if \texttt{sizeof}($s_i$) = 8, and $n \leq 12$, we pack the symbol $s_i$, its quantized probability $f(s_i)$ and quantized CDF $F(s_i)$ into a single 32-bit integer.

Our performance implementations, namely (2) (3) (4), expect that renormalization always completes in one step. The necessary condition is $b \geq n$ \cite{InterleavedRANS}. Therefore, we pick $b = 16$ by default to support probability quantization levels up to $16$, which is more than enough for most applications.
\vspace{-2mm}
\section{Experiments}
\label{sec:experiments}

\subsection{Experiment Setup}

\begin{table}[t]
  \caption{Description of the test datasets. The baseline compressed sizes are of variation (a). 1 KB = 1000 Bytes.}
  \vspace{-3mm}
  \label{tab:datasets}
  \centering
  \resizebox{7.5cm}{!}{\begin{tabular}{|c|c|c|c|}
    \hline
    \multirow{2}{*}{Name} & \multirow{2}{*}{Uncompressed} & \multicolumn{2}{c|}{Compressed [variation (a)]} \\
    & & \multicolumn{1}{c}{$n = 11$} & \multicolumn{1}{c|}{$n = 16$} \\ \hline\hline
    rand\_10 & 10,000 KB & 7,828 KB & 7,657 KB \\ \hline
    rand\_50 & 10,000 KB & 5,357 KB & 4,774 KB \\ \hline
    rand\_100 & 10,000 KB & 4,157 KB & 3,534 KB \\ \hline
    rand\_200 & 10,000 KB & 3,045 KB & 2,317 KB \\ \hline
    rand\_500 & 10,000 KB & 1,395 KB & 886 KB \\ \hline\hline
    dickens \cite{Datasets} & 10,192 KB & 6,268 KB & 5,794 KB \\ \hline
    webster \cite{Datasets} & 41,459 KB & 27,375 KB & 25,832 KB \\ \hline
    enwik8 \cite{enwik9} & 100,000 KB & 66,128 KB & 63,588 KB \\ \hline
    enwik9 \cite{enwik9} & 1,000,000 KB & 672,816 KB & 645,443 KB \\ \hline\hline
    div2k801 \cite{div2k} & 7,209 KB & N/A & 2,093 KB \\ \hline
    div2k803 \cite{div2k} & 7,864 KB & N/A & 3,208 KB \\ \hline
    div2k805 \cite{div2k} & 7,864 KB & N/A & 1,496 KB \\ \hline
  \end{tabular}}
  \vspace{-6mm}
\end{table}

\begin{table*}[h]
  \caption{Differences of compressed file sizes compared to variation (a) baseline in Table \ref{tab:datasets}, n=11.}
  \vspace{-1mm}
  \label{tab:overhead-n-11}
  \centering
  \resizebox{17cm}{!}{\begin{tabular}{|l|l|l|l|l|l|l|l|l|l|l|}
    \hline
    Dataset & \multicolumn{2}{l|}{(b) Conventional Large} & \multicolumn{2}{l|}{(c) Recoil Large} & \multicolumn{2}{l|}{(d) Conventional Small} & \multicolumn{2}{l|}{(e) Recoil Small} & \multicolumn{2}{l|}{(f) multians} \\ \hline
    rand\_10 & +211.44 KB & +2.70\% & +163.67 KB & +2.09\% & +1.47 KB & +0.02\% & +1.12 KB & +0.01\% & +76.43 KB & +0.98\%  \\ \hline
    rand\_50 & +211.37 KB & +3.95\% & +170.35 KB & +3.18\% &  +1.45 KB & +0.03\% & +1.16 KB & +0.02\% & -177.66 KB & -3.32\%  \\ \hline
    rand\_100 & +211.25 KB & +5.08\% & +172.91 KB & +4.16\% & +1.45 KB & +0.03\% & +1.18 KB & +0.03\% & -178.29 KB & -4.29\%  \\ \hline
    rand\_200 & +211.32 KB & +6.94\% & +179.39 KB & +5.89\% & +1.27 KB & +0.04\% & +1.09 KB & +0.04\% & -355.64 KB & -11.68\% \\
    \hline
    rand\_500 & +203.31 KB & +14.57\% & +189.57 KB & +13.59\% & +1.27 KB & +0.09\% & +1.14 KB & +0.08\% & -132.65 KB & -9.51\% \\
    \hline\hline
    dickens & +211.90 KB & +3.38\% & +165.07 KB & +2.63\% & +1.46 KB & +0.02\% & +1.14 KB & +0.02\% & -97.93 KB & -1.56\% \\ \hline
    webster & +211.71 KB & +0.77\% & +165.30 KB & +0.60\% & +1.46 KB & +0.01\% & +1.12 KB & +0.00\% & -119.88 KB & -0.44\%  \\ \hline
    enwik8 & +212.70 KB & +0.32\% & +165.56 KB & +0.25\% & +1.45 KB & +0.00\% & +1.12 KB & +0.00\% & +510.35 KB & +0.77\% \\ \hline
    enwik9 & +213.15 KB & +0.03\% & +166.89 KB & +0.02\% & +1.48 KB & +0.00\% & +1.14 KB & +0.00\% & +3,384.83 KB & +0.50\% \\ \hline
  \end{tabular}}
  \vspace{-2mm}
\end{table*}

\begin{table*}[h]
  \caption{Differences of compressed file sizes compared to variation (a) baseline in Table \ref{tab:datasets}, n=16.}
  \vspace{-1mm}
  \label{tab:overhead-n-16}
  \centering
  \resizebox{17cm}{!}{\begin{tabular}{|l|l|l|l|l|l|l|l|l|l|l|}
    \hline
    Dataset & \multicolumn{2}{l|}{(b) Conventional Large} & \multicolumn{2}{l|}{(c) Recoil Large} & \multicolumn{2}{l|}{(d) Conventional Small} & \multicolumn{2}{l|}{(e) Recoil Small} & \multicolumn{2}{l|}{(f) multians} \\ \hline
    rand\_10 & +211.19 KB & +2.76\% & +163.94 KB & +2.14\% & +1.47 KB & +0.02\% & +1.12 KB & +0.01\% & +200.56 KB & +2.62\%  \\ \hline
    rand\_50 & +210.56 KB & +4.41\% & +171.53 KB & +3.59\% & +1.47 KB & +0.03\% & +1.15 KB & +0.02\% & +337.15 KB & +7.06\%  \\ \hline
    rand\_100 & +211.03 KB & +5.97\% & +172.10 KB & +4.87\% & +1.46 KB & +0.04\% & +1.17 KB & +0.03\% & +358.53 KB & +10.15\%  \\ \hline
    rand\_200 & +208.95 KB & +9.02\% & +180.90 KB & +7.81\% & +1.30 KB & +0.06\% & +1.09 KB & +0.05\% & +372.44 KB & +16.07\% \\
    \hline
    rand\_500 & +208.53 KB & +23.54\% & +190.75 KB & +21.53\% & +1.27 KB & +0.14\% & +1.14 KB & +0.13\% & +376.81 KB & +42.54\% \\
    \hline\hline
    dickens & +211.56 KB & +3.65\% & +164.79 KB & +2.84\% & +1.46 KB & +0.03\% & +1.13 KB & +0.02\% & +312.40 KB & +5.39\% \\ \hline
    webster & +211.23 KB & +0.82\% & +165.03 KB & +0.64\% & +1.46 KB & +0.01\% & +1.12 KB & +0.00\% & +1,206.48 KB & +4.67\%  \\ \hline
    enwik8 & +212.35 KB & +0.33\% & +165.28 KB & +0.26\% & +1.47 KB & +0.00\% & +1.12 KB & +0.00\% & +2,507.60 KB & +3.94\% \\ \hline
    enwik9 & +213.10 KB & +0.03\% & +166.63 KB & +0.03\% & +1.51 KB & +0.00\% & +1.13 KB & +0.00\% & +25,660.70 KB & +3.98\% \\ \hline\hline
    div2k801 & +215.75 KB & +10.31\% & +173.41 KB & +8.28\% & +1.46 KB & +0.07\% & +1.18 KB & +0.06\% & \multicolumn{2}{c|}{N/A} \\ \hline
    div2k803 & +224.23 KB & +6.99\%  & +172.35 KB & +5.37\% & +1.49 KB & +0.05\% & +1.17 KB & +0.04\% & \multicolumn{2}{c|}{N/A} \\ \hline
    div2k805 & +212.47 KB & +14.20\% & +176.58 KB & +11.80\% & +1.49 KB & +0.10\% & +1.23 KB & +0.08\% & \multicolumn{2}{c|}{N/A} \\ \hline
  \end{tabular}}
  %\vspace{-3mm}
\end{table*}

We use 10 datasets to evaluate the codecs featuring different sizes and compressibilities, as shown in Table \ref{tab:datasets}. The \texttt{rand\_*} datasets are 10-Megabyte files generated with random exponentially distributed bytes, with $\lambda = 10, 50, 100, 200, 500$ respectively representing different compression rates. The dickens \cite{Datasets}, webster \cite{Datasets}, enwik8 \cite{enwik9} and enwik9 \cite{enwik9} datasets are ASCII text files of various sizes. We model these 7 datasets with static probability distributions generated by symbol statistics and use 8-bit symbol size. The \texttt{div2k*} datasets are high-quality images from the DIV2K validation set \cite{div2k}. We use the mbt2018-mean \cite{minnen2018} lossy codec to transform the source image into intermediate representations of 16-bit symbols. We adaptively model each symbol with different Gaussian distributions using hyperpriors.

We compare Recoil against the following baselines: (A) \textbf{Single-Thread}  or standard 32-way interleaved rANS; (B) the \textbf{Conventional} partitioning symbols approach with underlying 32-way interleaved rANS; and (C) \textbf{multians} \cite{multians}. To achieve the flexibility of parameters required in the experiments, and minimize the implementation differences so that the comparison focuses on the algorithms, we implemented (A) and (B) with the same building blocks that constitute Recoil. We checked that our baseline implementations show comparable throughput with state-of-the-art\footnote{We checked that our implementation of Single-Thread outperforms ryg\_rans \cite{ryg_rans}; the latter is a popular but SSE-4.1-based implementation which fails to unleash the full potential of modern CPUs. We compare our implementation of Conventional on GPU with DietGPU \cite{DietGPU} and it shows similar performance; DietGPU assumes probability quantization levels of 9 to 11 as part of the optimization strategy, which is not enough to perform some of the experiments.} implementations. We use the open-source implementation for multians but modify the state count only for the $n=16$ experiment; however, since it has limited support for adaptive coding, we omit it for the image compression tests.

\subsection{Compression Rate Results}

We first compress the datasets with two probability quantization levels: $n=11$ and $n=16$ and compare the compression rates. We only compress the image datasets with $n=16$ because a higher quantization level is required for 16-bit symbols. For each dataset and $n$, we encode the bitstream into six variations: (a) standard rANS bitstream compressed with \textbf{Single-Thread} serving as compression rate baseline, shown in Table \ref{tab:datasets}; (b) (c) the Large variations (2176 partitions/splits) for massively parallel GPU decoding, encoded with \textbf{Conventional} and \textbf{Recoil} correspondingly; (d) (e) the Small variations (16 partitions/splits) for parallel CPU decoding, (d) re-encoded with \textbf{Conventional}, and (e) converted from the Large (c) variation using \textbf{Recoil} splits combining; (f) tANS bitstream for decoding with \textbf{multians}. 

The differences in the compressed file sizes of variations (b) to (f), compared to the baseline (a), are shown in Table \ref{tab:overhead-n-11} and \ref{tab:overhead-n-16} for $n=11, 16$ respectively. The Small variations show negligible overheads, with max. 0.14\% for Conventional and 0.13\% for Recoil; however, the Large variations impact the compression rates more, as a max. 23.54\% increase in file size was observed for Conventional. In contrast, even without splits combining, Recoil variations already show lower overhead, with only max. 21.53\% and outperforming Conventional in every dataset. It is clear that when the baseline compressed file size is small and more splits are made, the parallelism impacts on compression rate becomes more dominant.

Recall that symbol partitions in Conventional cannot be combined; the content therefore must be either encoded into both Large (b) and Small (d) variations, creating extra storage and encoding overhead on the server, or every client is served the Large (b) variation, creating unnecessary data transfer. Recoil allows real-time conversion from Large (c) to Small (e), which is especially effective when the original compressed file size is small. This reduces max. -23.41\% compression rate overhead compared to sending the Conventional Large (b) variation, and eliminates the need for multiple encoded variations of the same data on the server.

Interestingly, some multians bitstreams at $n=11$ outperformed the baseline in compression rate, as shown in Table \ref{tab:overhead-n-11}. This is likely because the single-threaded baseline consists of 32-way interleaved rANS encoders, while multians use a single tANS encoder. However, this advantage vanished when we set $n=16$, as shown in Table \ref{tab:overhead-n-16}. 

\subsection{Decoding Throughput Results}
\label{ssec:exp-throughput}
\begin{figure*}[t]
  \centering
  \includegraphics[width=0.98\linewidth]{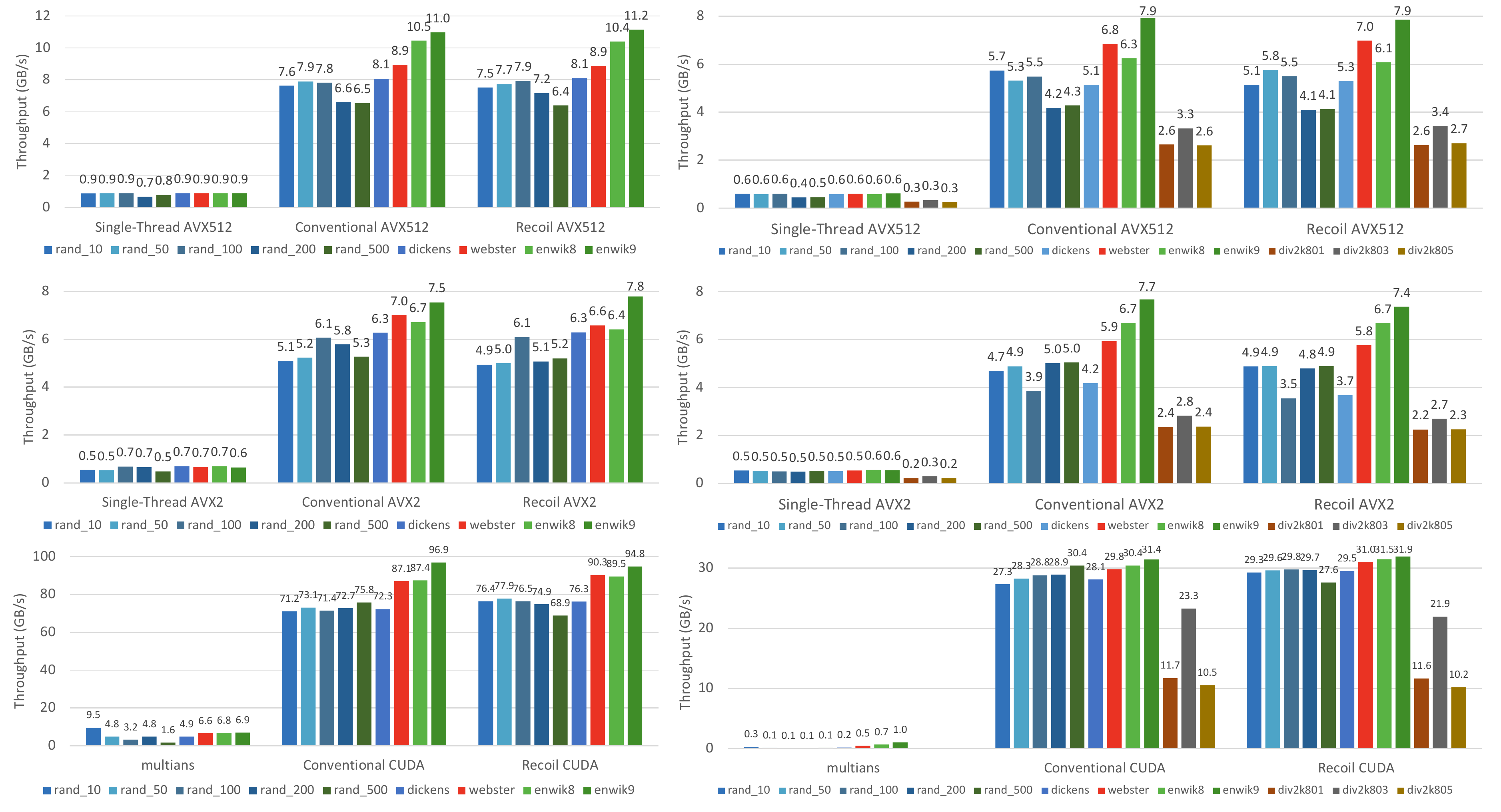}
  \vspace{-4mm}
  \caption{Performance comparison of baselines (A) and (B) with Recoil on CPU (AVX512 and AVX2) and of baseline (C) multians and (B) with Recoil on GPU, with n=11 (Left) and n=16 (Right).}
  \label{fig:perf-all}
  \vspace{-4mm}
\end{figure*}

Next, we test the decoding of the six bitstream variations with the corresponding decoders and measure the throughput. On CPU, we compare the AVX512 and AVX2 implementations of \textbf{Single-Thread} and \textbf{Conventional}, and \textbf{Recoil}, by decoding bitstream variations (a), (d) and (e). On GPU, we compare the CUDA implementations of \textbf{multians}, \textbf{Conventional} and \textbf{Recoil}, by decoding bitstreams (f), (b) and (c). We use Intel Xeon W-3245 (16C) for CPU experiments and NVIDIA GeForce RTX 2080 Ti for GPU experiments. We measure only the CUDA kernel execution for GPU, excluding memory transfer overhead. We average the throughput over 10 runs.

The throughput measurement results on CPU and GPU are shown in Figure \ref{fig:perf-all}. On CPU, both Conventional and Recoil greatly outperformed Single-Thread, showing massive scalability across multiple CPU cores, and using AVX512 yields even higher performance. The decoding throughput of Recoil and Conventional are comparable, reaching max. 11+ GB/s decoding throughput, showing that the extra synchronization and cross-boundary decoding steps of Recoil only bring negligible performance overhead. 

The conclusion on GPU is similar: Recoil and Conventional performed similarly, reaching max. 90+ GB/s decoding throughput. This indicates that the heuristic Recoil uses can distribute work evenly across threads even when there are many splits, and that the synchronization overhead is mostly negligible. 

Both rANS-based variations significantly outperformed multians, the tANS-based parallel decoder, which can even be matched by our CPU-based decoders. Especially when we manually set $n=16$, the self-synchronization overhead of multians is too large that it fails to output symbols at a usable throughput. Therefore, although multians is a zero-storage-overhead parallel tANS decoder, it has strong limitations that ultimately restrict its compression rate in other ways: we could use $n=16$ and still outperform the original multians with $n=11$ in both throughput and compression rate.
\vspace{-2mm}
\section{Conclusion}
\label{sec:conclusion}

In this work, we presented Recoil, a parallel interleaved rANS decoder that adaptively combines splits to match the parallel capacity of each decoder, which avoids wasting compression rate to provide unnecessary parallelism. We proposed to encode the symbols into a single interleaved rANS bitstream, then pick bitstream split points using a heuristic and record metadata to enable parallelism. We observed comparable decoding throughput to the conventional approach, outperforming other ANS decoders, while greatly saving compression rate. 

Although Recoil encoding cannot be done in parallel and encoding throughput is limited, this is often acceptable in content delivery applications, especially those involving high-resolution image, UHD video, VR and AR contents: complex and non-parallel prediction algorithms are widely used in their compression pipelines, in an effort to reduce data rate while maintaining quality, so that Recoil encoding is unlikely the bottleneck of these systems.

As future work, Recoil can be combined with standardized state-of-the-art image and video coding formats. Recoil can be an easy drop-in replacement for the single-threaded interleaved rANS coders: the Recoil metadata can be transmitted separately so that the coding format does not change. This creates massively parallel high-throughput and low-latency content delivery experiences while wasting no transmission overhead on unused parallelism. 

The open-source implementation of Recoil is made public at \href{https://github.com/lin-toto/recoil}{https://github.com/lin-toto/recoil}.
\vspace{-2mm}
\begin{acks}
This work was supported in part by NICT No. 03801, JST PRESTO JPMJPR19M5, JSPS Grant 21K17770, Kenjiro Takayanagi Foundation, and Foundation of Ando laboratory.

The authors would like to thank \textbf{Shota Hirose}, \textbf{Jinming Liu}, \textbf{Ao Luo}, \textbf{Zhongfa Wang}, \textbf{Huanchao Shen}, \textbf{Hiroshi Sasaki}, \textbf{Prof. Keiji Kimura}, \textbf{Chisato Nishikigi}, and \textbf{Takina Inoue} for the inspiring comments and discussions on this work. The project name is inspired by the TV anime series \textbf{"Lycoris Recoil"}, and we gratefully thank them for driving our research with the power of \textit{sakana} and \textit{chinanago}.
\end{acks}
\vspace{-2mm}

%%
%% The next two lines define the bibliography style to be used, and
%% the bibliography file.
\bibliographystyle{ACM-Reference-Format}
\bibliography{acmart}

%%
%% If your work has an appendix, this is the place to put it.
%\appendix

\end{document}